\newcommand\numbereq{\addtocounter{equation}{1}\tag{\theequation}}
\newcommand{\specialcell}[2][c]{\begin{tabular}[#1]{@{}c@{}}#2\end{tabular}}
\newtheorem{prop}{Proposition}
\begin{document}

\title{Modified cumulative distribution function in application to waiting time analysis in CTRW scenario
}


\author{Rafa{\l} Po{\l}ocza{\'n}ski \and Agnieszka Wy{\l}oma{\'n}ska \and Janusz Gajda \and Monika Maciejewska \and Andrzej Szczurek
}


\institute{R. Po{\l}ocza{\'n}ski, A. Wy{\l}oma{\'n}ska, J. Gajda \at
              Faculty of Pure and Applied Mathematics, Hugo Steinhaus Center, Wroc{\l}aw University of Technology \\
              Wybrze{\.z}e Wyspia{\'n}skiego, 27, 50-370 Wroc{\l}aw, Poland\\
                            \email{rafal.poloczanski,agnieszka.wylomanska,janusz.gajda@pwr.edu.pl}           
           \and
           M. Maciejewska, A. Szczurek \at
					Institute of Air Conditioning and District Heating, Wroc{\l}aw University of Technology\\
              Wybrze{\.z}e Wyspia{\'n}skiego, 27, 50-370 Wroc{\l}aw, Poland\\
                            \email{monika.maciejewska,andrzej.szczurek@pwr.edu.pl}  
}

\date{Received: date / Accepted: date}

\maketitle

\begin{abstract}
The continuous time random walk model plays an important role in modeling of so called anomalous
diffusion behaviour. One of the specific property of such model are constant time periods visible in trajectory. In the continuous time random walk approach they are realizations of the sequence called waiting times. The main attention of the paper is paid on the analysis of waiting times distribution. We introduce here  novel methods of estimation and statistical investigation of such distribution. The methods are based on the modified cumulative distribution function. In this paper we consider three special cases of waiting time distributions, namely $\alpha$-stable, tempered stable and gamma. However the proposed methodology can be applied to broad set of distributions - in general it may serve as a method of fitting any distribution function if the observations are rounded.  The new statistical techniques we apply to the simulated data as well as to the real data describing $CO_2$ concentration in indoor air.  
\keywords{waiting time \and continuous time random walk\and distribution \and estimation }
\end{abstract}

\section{Introduction}
\label{sec:introduction}

In the real data analysis the selection of appropriate model suitable to examined  vector of observations is the most important issue. The proper model should take under consideration the important properties of examined data. The one of the simplest models is based on the assumption that the vector of observations constitute a sample of independent identically distributed random variables.  However, for most time series the theoretical model is much more complicated.  Therefore there is a need to analyse more sophisticated systems in order to cover all properties of the examined signal. 

For some analysed data very often we observe specific behaviour, namely there are visible constant time periods. This special property may indicate that the theoretical model behind the time series is based on the so called continuous time random walk. This model in the classical version is constructed as a sum of independent random variables (called later jumps) from the same distribution however the number of jumps in the mentioned sum is a realization of the process based on the so called waiting time sequence. Therefore the continuous time random walk model is defined through the jumps as well as waiting time distributions.

In the continuous time random walk scheme, originally introduced in \cite{bib:Montrol1965}, the waiting times between the jumps are not constant, as in the standard random walk, but are random variables governed by some
probability law. As a consequence, the continuous time random walk model is a natural description of
transport in crowded environments and complex systems. It is worth to mention, the limit distribution of the continuous time random walk scheme can be also formulated within the framework of the fractional Fokker-Planck equation \cite{metzler2000}. At the same time, the limit of the scaled continuous time random walk  is a process (called subordinated process) which is a superposition of two systems: one described by Langevin type equations and the second one, called inverse subordinator \cite{sub1,sub2}. In the recent years the subordinated processes more and more often appear in the literature \cite{sub3,sub4,sub5}. 

The continuous time random walk model as well as its limit (subordinated processes) is one of the most important system that exhibits anomalous diffusion property.  Anomalous diffusion models
have found many practical applications. They were used in variety of
physical systems, including charge carrier transport in amorphous semiconductors
\cite{a2,a3,a4}, transport in micelles \cite{a5}, intracellular transport \cite{a6} or motion
of mRNA molecules inside E. coli cells \cite{a7}. The behaviour corresponding to continuous time random walk scenario  can be also observed in  stock prices or interest rates data \cite{bib:Orzel2011,Janczura_orzel} as well as in the time series related to indoor air quality parameters \cite{bib:Maciejewska2012,bib:Janczura2013,bib:Maciejewska2015}.

One of the most important issues that arises in the analysis of the continuous time random walk model is the description of waiting times that correspond to
the periods of constant values visible in the data. Finding a proper waiting times distribution
allows to conclude on the properties of the whole process. The most popular
 distribution of the constant time periods is the  $\alpha$-stable \cite{Janczura_orzel,sub3},
but recent developments  indicate that another non-negative infinitely
divisible distributions can be also used to model the observed waiting times,
\cite{bib:MagdziarzGajda2010,inne2,inne3,bib:Janczura2011,sub1}. 

There are known successful attempts of applying waiting time analysis in many areas of research and practice.  In this work we demonstrate its usefulness in the domain of measurements. 
A continuous monitoring of various quantities is used in many applications as preferred to other type of measurements \cite{bib:monitoring1,bib:monitoring2}. Usually, it provides a great number of discrete data. They have to be transmitted, verified and analysed \cite{bib:performance}. These operations consume significant amounts of time and energy. Therefore, they pose a serious problem in many monitoring systems, especially in wireless networks \cite{bib:wireless}. Additionally, continuous monitoring may be a source of useless data. It results from the measurement characteristics of sensors e.g. inappropriate accuracy and sensitivity \cite{bib:Szczurek2015}. These disadvantages can be reduced by the application of a suitable sampling procedure \cite{bib:sampling}. Sampling indicates how much data to collect and how often it should be collected in order to gain access to the requested information. In this work, it is demonstrated that the statistical analysis of waiting time for an essential change of $CO_2$ concentration inside room may be applied to specify the most appropriate time interval between consecutive measurements. In this way, the sampling rate (frequency of sampling) is determined. It was assumed that the sampling procedure could reduce the amount of data. However, the information content cannot be significantly reduced. 

The main goal of the paper is to introduce a new method of estimation and statistical investigation of waiting times distribution in the continuous time random walk scenario.  The  most commonly used method to identify the correct distribution of constant time periods is visualization of empirical cumulative distribution function  and comparison with fitted parametric distribution \cite{bib:Wylomanska2012}. However,  testing procedures which would allow to identify the correct class of distribution for observations corresponding to waiting times  are very rarely considered in the literature. The difficulty arises from the discretization of the time intervals which is a natural consequence of time-discretization of measuring devices. Moreover, standard estimation procedures  might lead to under- or overestimation of parameters. The proposed methods of estimation and statistical testing of proper distribution of waiting times in continuous time random walk scenario  are based on the modified cumulative distribution function. Here we propose a novel techniques that can be applied to broad set of distributions - in general it may serve as a method of fitting any distribution function if the observations are rounded. In our analysis we consider three waiting times distributions, namely $\alpha$-stable, tempered stable and gamma, as the most commonly used to constant time periods description.

The rest of the paper is organized as follows: in Section \ref{sec:model} we introduce the continuous time random walk model with three possible waiting time distributions, namely $\alpha-$stable, tempered stable and gamma and indicate the main properties of the considered models. Next, in Section \ref{sec:estimation} we introduce the modified cumulative distribution function which is a base for estimation and testing of proper distribution for constant time periods. In Section \ref{sec:simulation} we present how the introduced methods work for simulated data from continuous time random walk while in Section \ref{sec:analysis} we analyse the real time series that describe the $CO_2$ concentration in the indoor air in the context of presented methodology. Last section concludes the paper.


\section{Model}
\label{sec:model}
{Here we present continuous time random walk (CTRW) methodology \cite{bib:Montrol1965} first introduced for description of the motion of a particle which arrives at a position, waits random time and then next jumps randomly to the new position. Thus such a motion is determined by the waiting time and jump distributions. CTRWs models are nowadays well established and popular mathematical objects particularly attractive in the description of the so called anomalous diffusion phenomenon \cite{metzler2000,bib:Klafter2011}. It is an interesting observation that many CTRWs converge under suitable assumptions to the so called subordinated processes i.e. processes where physical time is replaced by some another stochastic process \cite{bib:Magdziarz2009,bib:Magdziarz22009}. For instance CTRW model with power law heavy-tailed stable waiting times and finite second moment jump distribution converges to the subordinated Brownian motion \cite{bib:MeerschaertInfinite} $B(S_\alpha^{-1}(t))$. Here $B(\cdot)$ is a Brownian motion and $S_\alpha^{-1}(\cdot)$ is the so called inverse $\alpha$-stable subordinator defined in \eqref{inverseStable}.  However one can name many more examples of subordinated processes and their different applications, both regarding the outer and inner processes.  To name only few we mention subordinated fractional Brownian motion \cite{bib:Gajda2013}, subordinated $\alpha-$stable L\'evy process where as a time change authors considered tempered stable  subordinator, \cite{bib:AgnieszkaJanusz2014} and inverse gamma subordinated Brownian motion \cite{bib:Janczura2011}. }
In this paper we consider real data set of  discrete observations for both time and scale  hence it is reasonable to consider as a proper model the system based on continuous time random walk  methodology. In such classical CTRW setting one considers the following 
stochastic process
\begin{equation}
\label{CTRW_proc}
Y(t)=\sum\limits_{i=1}^{N(t)} X_i,
\end{equation}
where $N(t)$ is a a counting process defined as 
\begin{equation}
\label{counting}
N(t)=\max\left\{k\geq 0:\;\sum\limits_{i=1}^k T_i\leq t \right\}.
\end{equation}
Here $T_i\; i=1,2,...$ form a sequence of positive independent, identically distributed (IID) random variables which can be seen as a waiting times between consecutive jumps $X_i$. We assume that the sequences $\{T_i\}_{i=1}^\infty$ and $\{X_i\}_{i=1}^\infty$  form  independent sequences. 

{ 
In this paper we consider  three particular distributions of waiting times for process defined in (\ref{CTRW_proc}), namely $\alpha$-stable, tempered stable and gamma. We denote such random variables as $T_i^S, T_i^T$ and $T_i^G$, respectively. In the case when the sequence $\{T_i^S\}$ constitutes a sample of IID random variables with $\alpha$-stable distribution, for each $i$ the random variable $T_i^S$ has the following Laplace transform \cite{bib:JanickiWeron}
\begin{equation}
\label{stable}
\mathbb{E}\left( e^{-zT_i^S}\right)=e^{-\sigma^\alpha z^\alpha},
\end{equation}
where $\sigma>0$ is a scale parameter and $\alpha\in(0,1)$.
In case of tempered stable waiting times, the $T_i^T$ random variable has the following Laplace transform \cite{bib:MagdziarzGajda2010,bib:Meerschaert2010}
\begin{equation}
\label{tempered}
\mathbb{E}\left( e^{-zT_i^T}\right)=e^{c(\lambda^\alpha-(\lambda+z)^\alpha)},
\end{equation}
for some parameters $c,\lambda>0$ and $\alpha\in (0,1)$. One observes that taking $\lambda=0$ tempered law reduces to stable one. Tempered stable distributions are particularly important in applications due to the fact that they can be in the same time close to $\alpha$-stable distributions and posses finite moments of all orders. From the above Laplace transform one can infer relation between probability density functions (PDFs) of pure 
$\alpha$-stable ($g_{T^S}$) and tempered stable ($g_{T^T}$) distributions, namely \cite{bib:Janczura2013}
\begin{equation*}
g_{T^T}(x)=e^{-\lambda x+\lambda^\alpha}g_{T^S}(x).
\end{equation*}
Third class of distribution of waiting times is the positive gamma distribution for which Laplace transform has the form \cite{bib:Janczura2011}
\begin{equation}
\label{gamma}
\mathbb{E}\left( e^{-zT_i^G}\right)=\left(\frac{1}{1+z\theta}\right)^k,\; \theta>0, k>0.
\end{equation}
In this paper we assume that jumps possess finite second moment thus applying classical Donsker theorem \cite{bib:Whitt} one can prove their convergence to Brownian motion. The same argument applies also when waiting times have finite second moment. Thus this holds in case of tempered stable and gamma waiting times in CTRW scenario. However, when the sequence $T_i^S$ belongs to the domain of attraction of one sided L\'evy stable distribution (defined via Laplace transform \eqref{stable}) then \cite{bib:MeerschaertInfinite}
\begin{equation*}
n^{-1/\alpha}\sum\limits_{i=1}^{[nt]}T_i^S{\overset{d}{\to}} S_\alpha(t)
\end{equation*}
as $n\to\infty$ for fixed $t$. Here $S_\alpha(t)$ is the $\alpha$-stable subordinator with the Laplace transform \cite{bib:Janczura2011}
\begin{equation*}
\mathbb{E}\left( e^{-sS_\alpha(t)}\right)=e^{-ts^\alpha}.
\end{equation*}
The above notation "${\overset{d}{\to}}$" means "convergence in distribution". In the context of the CTRW convergence let us also introduce the so called inverse $\alpha$-stable subordinator $S_\alpha^{-1}(t)$ defined as a first passage time of $S_\alpha(t)$ i.e. 
\begin{equation}
\label{inverseStable}
S_\alpha^{-1}(t)=\inf\{\tau\geq 0:S_\alpha(\tau)>t\}.
\end{equation}
Then one can formulate the following simple fact for heavy-tailed CTRWs. }
\begin{prop}
Let $Y(t)$ be the CTRW process defined in \eqref{CTRW_proc}. Assume that random variables $X_i$ are IID with finite mean $0$ and second moment equal to $1$. Moreover, the jump times $T_i^S$ belong to the domain of attraction of one sided $\alpha-$stable distribution \eqref{stable} with some $\alpha\in(0,1)$. Then 
\begin{equation}
    \label{firstCTRW}
	\frac{Y(nt)}{n^{\alpha/2}}{\overset{d}{\to}}B(S_\alpha^{-1}(t)),
\end{equation}
where $B(\cdot)$ is a standard Brownian motion and $S_\alpha^{-1}(t)$ is inverse $\alpha$-stable subordinator defined in \eqref{inverseStable}.
\end{prop}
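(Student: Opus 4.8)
The plan is to realize the rescaled CTRW as a composition of two independent rescaled processes, each of which converges by a functional limit theorem, and then pass to the limit through a continuous-mapping argument for the composition map. Writing $m = n^\alpha$, the key algebraic identity is
\begin{equation*}
\frac{Y(nt)}{n^{\alpha/2}} = \frac{1}{\sqrt{m}}\sum_{i=1}^{N(nt)} X_i = A_n\!\left(U_n(t)\right),
\end{equation*}
where $A_n(s) = m^{-1/2}\sum_{i=1}^{[ms]} X_i$ is the Donsker-rescaled partial-sum process of the jumps and $U_n(t) = n^{-\alpha}N(nt)$ is the rescaled counting process. The identity holds exactly because $N(nt)$ is integer-valued, so $[m\,U_n(t)] = N(nt)$.

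First I would establish the two marginal limits. For the jumps, since the $X_i$ are IID with mean $0$ and variance $1$, the functional Donsker theorem \cite{bib:Whitt} gives $A_n \Rightarrow B$ in the Skorokhod space $D[0,\infty)$ with the $J_1$ topology, where $B$ is standard Brownian motion; note $m=n^\alpha\to\infty$ as $n\to\infty$, so the hypothesis is met. For the waiting times I would start from the one-sided stable limit already recorded in the excerpt, namely $n^{-1/\alpha}\sum_{i=1}^{[nt]}T_i^{S}\overset{d}{\to}S_\alpha(t)$ \cite{bib:MeerschaertInfinite}, upgraded to convergence of the whole partial-sum path to the $\alpha$-stable subordinator $S_\alpha$. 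Since $N(nt)$ is precisely the first-passage (generalized inverse) functional of this partial-sum process, and $S_\alpha$ is strictly increasing, the continuous-mapping theorem for inverses yields $U_n(t) = n^{-\alpha}N(nt)\Rightarrow S_\alpha^{-1}(t)$, with $S_\alpha^{-1}$ the inverse stable subordinator of \eqref{inverseStable}. The matching of exponents is exactly the $n^{\alpha}$ scaling: since $\sum_{i=1}^{k}T_i^{S}$ grows like $k^{1/\alpha}$, the level-$nt$ crossing occurs at index of order $(nt)^\alpha$.

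The crucial structural input is independence. Because the excerpt assumes $\{X_i\}$ and $\{T_i\}$ are independent sequences, the processes $A_n$ and $U_n$ are independent for every $n$, so their joint law is the product of the marginals; combined with the two marginal limits this gives the joint weak convergence $(A_n, U_n)\Rightarrow (B, S_\alpha^{-1})$ in $D[0,\infty)\times D[0,\infty)$, with $B$ and $S_\alpha^{-1}$ \emph{independent}. I would then apply the continuous-mapping theorem to the composition map $(x,y)\mapsto x\circ y$. This map is measurable but only continuous at pairs whose first coordinate is continuous and whose second coordinate is continuous and non-decreasing; here $B$ has continuous paths a.s. and $S_\alpha^{-1}$ is a.s. continuous and non-decreasing (the strict increase of $S_\alpha$ forces continuity of its inverse), so the limit point lies in the continuity set with probability one. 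Hence $A_n(U_n(\cdot))\Rightarrow B(S_\alpha^{-1}(\cdot))$, which is the claim \eqref{firstCTRW}.

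The main obstacle is this last step: verifying that the composition functional is continuous at the random limit $(B,S_\alpha^{-1})$ in the Skorokhod topology, so that the continuous-mapping theorem applies. This is where the regularity of the two limit processes is essential — one must check that a.s. no discontinuity of the outer process can collide with a jump or flat stretch of the inner time-change in a way that breaks $J_1$-continuity. Given the a.s. continuity of both $B$ and $S_\alpha^{-1}$, this reduces to a standard lemma on composition in $D[0,\infty)$, and the independence secured above is precisely what lets us assert that the exceptional configurations form a null set.
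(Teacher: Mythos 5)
Your argument is correct, but it is not the route the paper takes: the paper offers no argument of its own, disposing of the proposition in one line as a consequence of Theorem 1 in \cite{bib:Zebrowski2012}. What you have written out is the classical direct proof, essentially that of Meerschaert and Scheffler \cite{bib:MeerschaertInfinite} combined with Whitt's composition theorem \cite{bib:Whitt}: the exact identity $Y(nt)/n^{\alpha/2}=A_n(U_n(t))$ with $m=n^{\alpha}$ (valid because $N(nt)$ is integer-valued), Donsker for the jump partial sums, the functional stable limit plus continuity of the inverse map at strictly increasing paths to get $U_n(t)=n^{-\alpha}N(nt)\Rightarrow S_\alpha^{-1}(t)$, joint convergence of $(A_n,U_n)$ from independence of the two sequences, and continuity of $(x,y)\mapsto x\circ y$ at $(B,S_\alpha^{-1})$ because $B$ is a.s.\ continuous and $S_\alpha^{-1}$ is a.s.\ continuous and nondecreasing. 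Your route buys a self-contained proof that isolates where each hypothesis enters (mean zero and unit variance for Donsker, the domain of attraction for the subordinator convergence, independence of $\{X_i\}$ and $\{T_i\}$ to identify the joint limit as a product law), and it in fact delivers functional convergence in $D[0,\infty)$, which is stronger than the fixed-$t$ statement \eqref{firstCTRW}; the paper's one-line citation is shorter and leans on a result tailored to L\'evy-walk-type CTRWs whose proof machinery is essentially the same composition argument you reconstruct. Two small points to tidy: the upgrade from the one-dimensional stable limit $n^{-1/\alpha}\sum_{i=1}^{[nt]}T_i^{S}\overset{d}{\to}S_\alpha(t)$ to path convergence of the waiting-time partial sums requires Skorokhod's functional limit theorem and should be cited rather than asserted; and your closing remark slightly misattributes the null-set argument --- the exceptional configurations for the composition map are negligible because of the a.s.\ regularity of the limit paths $B$ and $S_\alpha^{-1}$ (continuity of $S_\alpha^{-1}$ coming from strict increase of $S_\alpha$), while independence is used only to conclude that the limiting pair carries the product law, i.e.\ that $B$ and $S_\alpha^{-1}$ in \eqref{firstCTRW} are independent.
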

Proof of this statement is an consequence of the Theorem 1 in \cite{bib:Zebrowski2012}.\\




\section{Estimation Procedure}
\label{sec:estimation}

The estimation procedure of the parameters corresponding to the considered CTRW model defined in (\ref{CTRW_proc}) is divided into few steps. One of the specific behaviour of the CTRW process (and corresponding processes driven by inverse subordinator) is the fact that they exhibit so called "trapping events" behaviour, i.e. visible constant time periods   (the time-intervals for which the process stays on the same level). Here we will use this property.  Therefore in the first step in the estimation scheme we divide the analysed vector of observations into two vectors. The first one is related to the lengths of "trapping events" while the second one represents the vector of observations that arises after removing the "trapping events". This standard procedure of analysis of CTRW models was used in various applications, see for example \cite{bib:Maciejewska2012,bib:Wylomanska2016,bib:Gajda2012_2}. The   parameters corresponding to the waiting times distribution we estimate on the basis of lengths of "trapping events". The whole procedure of testing and estimation of the waiting times distribution is presented in the further part of this section. After estimation of the waiting times distribution we analyse the process after removing the "trapping events", i.e jumps in the CTRW scenario. The details we present in the next subsections.
\subsection{Waiting times analysis}\label{sec:waitingTimesEstimation}

One of the main questions that arise during estimation procedure of the waiting times is identification of their correct distribution. In the literature most often it is assumed that the observations related to waiting times  follow infinitely divisible distribution e.g. $\alpha-$stable, tempered stable or gamma \cite{bib:Janczura2011}. The  most commonly used method to identify the correct distribution is visualization of empirical cumulative distribution function (CDF) on a plot with log-log scale and comparison with fitted parametric distribution \cite{bib:Wylomanska2012}. However, a testing procedures which would allow to identify the correct class of distribution for observations corresponding to waiting times are very rarely considered in the literature. The difficulty arises from the discretization of the time intervals which is a natural consequence of time-discretization of measuring devices. Furthermore, standard estimation procedures like method of moments in case of gamma and tempered stable distribution \cite{bib:Gajda2013} and McCulloch or regression method in case of $\alpha-$stable distribution \cite{bib:Burnecki2012} might lead to under- or overestimation of parameters. The other methods, such as fitting of linear and exponential functions in a log-log scale, respectively for $\alpha-$stable and tempered stable distributions, require setting arbitrary thresholds which usually depend on the set of parameters \cite{bib:Orzel2011}.

In this paper we propose a new estimation procedure which can be applied for fitting distribution function when the observations are rounded or discretized, e.g. waiting times until a characteristic of a process changes. In addition, we show that the procedure can be used to identify the class of distribution which best fits to the data among certain classes. As an example we perform a comparison between $\alpha-$stable, tempered stable and gamma distribution, as the most commonly used for description of waiting times distribution.

The main issue during estimation of constant time periods comes from the fact that the exact waiting time is unknown and usually comes from continuous distribution. For example, if we observe that a character of the process has changed after 3 units of time, it is not known at which point of time the change actually happened, the correct value lies in the interval $(2, 4)$ which can be seen in Fig. \ref{fig:periodComparison}. In other words, a constant time period equal to 2.5 might be classified as 2 or 3 with equal probability.
\begin{figure}
\centering
  \includegraphics[width=0.5\textwidth]{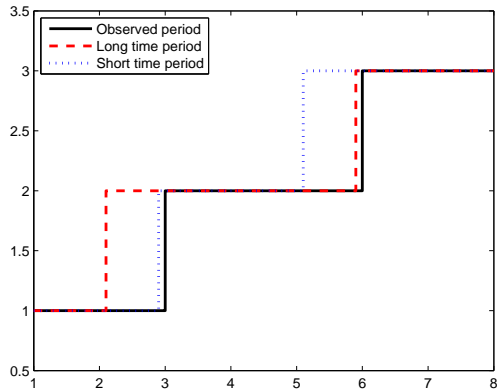}
\caption{Comparison of potential real time period versus observed value. Although observed value is 3 time units, the exact value might be short time period close to 2 as well as long time period close to 4.}
\label{fig:periodComparison}
\end{figure}

Due to these facts we introduce a modified version of cumulative distribution function. Let X be a non-negative continuous random variable with cumulative distribution function $F$ and probability distribution function $f$. We define mass function $\tilde{f}$ with a support on natural numbers in the following way
\begin{align*}
&\tilde{f}(0) = \int\limits_{0}^{1} f(x)(1-x) dx  \\
&\tilde{f}(n) = \int\limits_{n-1}^{n} f(x)(x-n+1) dx + \int\limits_{n}^{n+1} f(x)(n-x+1) dx \quad \mbox{for $n \geq 1$.}
\end{align*}
The modified cumulative distribution function is defined as $\tilde{F}(x) = \sum_{i \leq x} \tilde{f}(i)$. At the points of discontinuity the function can be expressed as
\begin{align*}
&\tilde{F}(0) = \int\limits_{0}^{1} f(x)(1-x) dx  \\
&\tilde{F}(n) = \hat{F}(n-1) + \int\limits_{n-1}^{n} f(x)(x-n+1) dx + \int\limits_{n}^{n+1} f(x)(n-x+1) dx \quad \mbox{for $n \geq 1$,} \numbereq
\label{mCDF}
\end{align*}
where $n \in \mathbb{N}$. Since for waiting times we limit the classes of distribution functions only to the ones with non-negative support and assume that the observations are done in equal time intervals, it is sufficient to define $\tilde{F}$ on the set of natural numbers. However, it is straightforward to extend the definition of the function to a countable set.

\begin{figure}
\centering
  \includegraphics[width=0.5\textwidth]{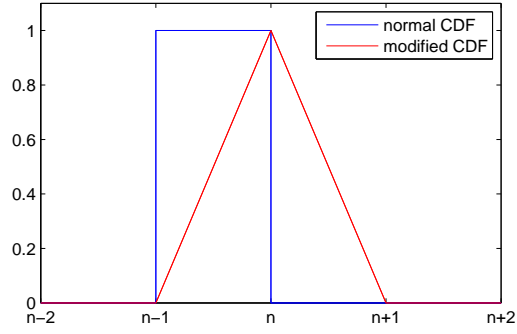}
\caption{Difference in integration between cumulative and  modified cumulative distribution functions.}
\label{fig:contrast}
\end{figure}

The graphical interpretation of differences between  cumulative and introduced modified cumulative distribution functions is illustrated in Fig. \ref{fig:contrast}. Furthermore, it is easy to show that
\begin{align*}
\tilde{F}(n) &= F(n) + \int_{n}^{n+1} f(x)(n-x+1) dx = \\
&= F(n) + (n+1)(F(n+1)-F(n)) -\int_{n}^{n+1} xf(x) dx = \\
&= (n+1)F(n+1)-nF(n) - ((n+1)F(n+1) - nF(n) - \int_{n}^{n+1} F(x) dx) = \\
&= \int_{n}^{n+1} F(x) dx.
\end{align*}
\begin{prop}
The modified cumulative distribution function $\tilde{F}$ can be considered as cumulative distribution function.
\end{prop}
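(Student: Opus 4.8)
The plan is to verify directly that $\tilde F$ satisfies the three defining properties of a cumulative distribution function on the integers: it is non-decreasing, it tends to $0$ far to the left, and it tends to $1$ far to the right (right-continuity is automatic, since $\tilde F$ is the right-continuous step function $\sum_{i\le x}\tilde f(i)$ by construction). Equivalently, it suffices to check that $\tilde f$ is a genuine probability mass function, i.e. that $\tilde f(n)\ge 0$ for every $n$ and that $\sum_{n\ge 0}\tilde f(n)=1$.

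First I would establish non-negativity. On each relevant interval the weight appearing in the definition of $\tilde f(n)$ is non-negative: for $x\in[0,1]$ one has $1-x\ge 0$, for $x\in[n-1,n]$ one has $x-n+1\in[0,1]$, and for $x\in[n,n+1]$ one has $n-x+1\in[0,1]$. Since $f\ge 0$ as a probability density, each summand defining $\tilde f(n)$ is an integral of a non-negative integrand, so $\tilde f(n)\ge 0$. This immediately gives that $\tilde F$ is non-decreasing, and since $X$ is non-negative we have $f\equiv 0$ on $(-\infty,0)$, so $\tilde F(x)=0$ for $x<0$, yielding the correct left limit.

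The only substantive point is normalization, and here I would exploit the closed form $\tilde F(n)=\int_n^{n+1}F(x)\,dx$ already derived above the statement. Because $\tilde F(n)=\sum_{i\le n}\tilde f(i)$ is the partial sum of the masses, the total mass equals $\lim_{n\to\infty}\tilde F(n)=\lim_{n\to\infty}\int_n^{n+1}F(x)\,dx$. Since $F$ is a distribution function with $F(x)\to 1$ as $x\to\infty$, for any $\varepsilon>0$ one has $F(x)>1-\varepsilon$ for all $x$ large enough, so the unit-length integral $\int_n^{n+1}F(x)\,dx$ is squeezed between $1-\varepsilon$ and $1$ for large $n$; hence the limit is $1$. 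This confirms $\sum_{n\ge 0}\tilde f(n)=1$ and completes the verification.

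I expect the main (mild) obstacle to be conceptual rather than computational: reconciling the fact that the cumulative quantity $\tilde F(n)$ collapses to a single-interval integral $\int_n^{n+1}F$ rather than to a growing integral over $[0,n+1]$. A clean way to see why this is consistent --- and an alternative route to normalization --- is to observe that the triangular weights form a partition of unity, $\sum_{n\ge 0}\max(0,1-|x-n|)=1$ for every $x\ge 0$; interchanging sum and integral by Tonelli's theorem (licit since all terms are non-negative) then gives $\sum_{n\ge 0}\tilde f(n)=\int_0^\infty f(x)\,dx=1$ in one stroke, independently of the telescoping identity.
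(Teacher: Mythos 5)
Your proof is correct, but it is organized around a genuinely different observation than the paper's. The paper works exclusively with the closed form $\tilde{F}(n)=\int_n^{n+1}F(x)\,dx$ derived just before the proposition: boundedness follows from $0\le F\le 1$ under the unit-length integral, monotonicity from comparing $\int_n^{n+1}F(x)\,dx$ and $\int_{n-1}^{n}F(x)\,dx$ via the monotonicity of $F$ (bounding both by $F(n)$), and the limit at infinity from the sandwich $F(n)\le\tilde{F}(n)\le 1$. You instead verify directly that $\tilde{f}$ is a probability mass function: non-negativity of the triangular weights gives $\tilde{f}(n)\ge 0$, hence monotonicity of $\tilde F$ for free — more elementary than the paper's integral comparison — while your squeeze argument for $\lim_n\int_n^{n+1}F(x)\,dx=1$ is essentially the paper's limit step in cleaner form (the paper's version contains a garbled line, $\lim_{n\to\infty}F(n)\,dx$, which you avoid). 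Your second normalization argument is the most substantive departure: noting that the hat weights satisfy $\sum_{n\ge 0}\max(0,1-|x-n|)=1$ for all $x\ge 0$ and invoking Tonelli yields $\sum_n\tilde{f}(n)=\int_0^\infty f(x)\,dx=1$ in one stroke, independently of the telescoping identity, and it also explains structurally why the cumulative sum collapses to a single-interval integral. What each buys: the paper's route is shortest given the identity already established in the text; yours is self-contained, exposes $\tilde f$ as an honest pmf, and generalizes immediately to any rounding scheme whose weights form a partition of unity, which matches the paper's own remark that the method should apply to arbitrary rounded observations.
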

\begin{proof}
It is easy to show that modified cumulative distribution function $\tilde{F}$ is bounded between 0 and 1, namely
\begin{equation*}
0 = \int_{n}^{n+1} 0 \leq \int_{n}^{n+1} F(x) \leq \int_{n}^{n+1} 1 = 1.
\end{equation*}
Straightforward calculations show that
\begin{equation*}
\lim\limits_{n\rightarrow\infty}{\tilde{F}(n)} = \lim\limits_{n\rightarrow\infty}{F(n) + \int_{n}^{n+1} f(x)(n-x+1) dx} \geq \lim\limits_{n->\infty}{F(n)dx} = 1
\end{equation*}
since the upper boundary is equal to 1, we conclude that $\lim\limits_{n\rightarrow\infty}{\tilde{F}(n)} = 1$. Since the support of mass function are natural numbers, it is obvious that $\lim\limits_{n\rightarrow-\infty}{\tilde{F}(n)} = 0$.
Moreover, we can show the modified cumulative distribution function is non-decreasing
\begin{equation*}
\tilde{F}(n+1)-\tilde{F}(n) = \int_{n}^{n+1} F(x) dx - \int_{n-1}^{n} F(x) dx \geq \int_{n}^{n+1} F(n) dx - \int_{n-1}^{n} F(n) dx = 0.
\end{equation*}
At the end we mention, the modified cumulative distribution function $\tilde{F}$ is right-continuous which stems from its definition.
\end{proof}
One of the main goal of this paper is to introduce a new method of estimation  of waiting times distribution parameters under the assumption that they are continuously distributed. Moreover, we propose also the procedure of waiting times distribution recognition among chosen distributions. Due to the fact that between consecutive measurements only a single change in the process can be observed, a further modification of the function $\tilde{F}$ defined in (\ref{mCDF}) is needed. We define $G$ as rescaled modified cumulative distribution function (RMCDF) in the following way
\begin{align*}
&G(0) = 0 \\
&G(n) = \frac{\tilde{F}(n)}{1-\tilde{F}(0)} \quad \mbox{for $n \geq 1$.} \numbereq
\label{finalCDF}
\end{align*}
It is obvious that $G$ still has properties of cumulative distribution function. The comparison between both types of cumulative distribution functions can be seen in Fig. \ref{fig:cdfComp}. The distance between empirical CDF of waiting times can be easily observed: on the left panel the theoretical CDF of $\alpha-$stable distribution underestimates the empirical one whereas on the right panel the theoretical CDF of tempered stable distribution overestimates the empirical one. In both cases the rescaled modified CDF provides reasonable fit to the data.

\begin{figure}
\centering
  \includegraphics[width=1\textwidth]{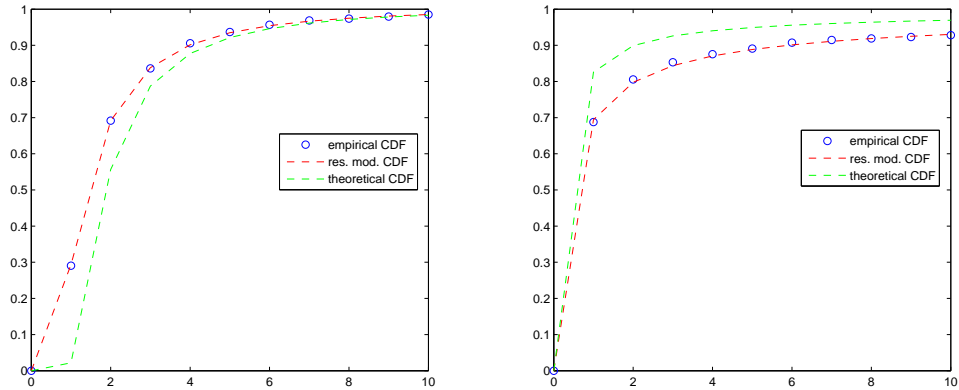}
\caption{Comparison of empirical (based on the waiting times observations), theoretical and rescaled modified cumulative distribution functions. The observations were taken from a single trajectory of CTRW of length 10000, waiting times simulated using: left panel - stable distribution with $\alpha = 0.7, \sigma=0.1$, right panel - tempered stable distribution with $\alpha=0.8, \lambda=0.1, \sigma=2$.}
\label{fig:cdfComp}
\end{figure}

The proposed methodology of fitting the distribution parameters corresponding to waiting times is based on the minimum distance estimation applied to a certain type of distribution. Let $K$ and $L$ denote two functions with a common support on $\mathbb{R}$, the considered distances are
\begin{itemize}
	\item Kolmogorov-Smirnov (KS)
	\begin{eqnarray*}
	KS(K,L) = \sup_{x \in R}{\left|K(x)-L(x)\right|}
	\end{eqnarray*}
	\item Cram\'er-von Mises (CvM)
	\begin{eqnarray*}
	CvM(K,L) = \int_{-\infty}^{\infty}(K(x)-L(x))^2 dL(x)
	\end{eqnarray*}
	\item Anderson-Darling (AD)
	\begin{eqnarray*}
	AD(K,L) = \int_{-\infty}^{\infty} \frac{(K(x)-L(x))^2}{L(x)(1-L(x))} dL(x).
	\end{eqnarray*}
\end{itemize}
In our estimation procedure we consider the distance between the rescaled modified cumulative distribution function introduced in (\ref{finalCDF}) and empirical distribution function of waiting times defined by
\begin{eqnarray}
\hat{F}_n(t) = \frac{1}{n}\sum_{i=1}^{n}{\textbf{1}_{x_i\leq t}},
\label{empCDF}
\end{eqnarray}
where ${\textbf{1}_{A}}$ is the indicator of the set $A$. For each class of cumulative distribution functions  we can find parameters which minimize the distance to the empirical distribution function, i.e. $G_{\theta_0}$ which satisfies the following condition
\begin{eqnarray}
D(G_{\theta_0}, \hat{F}) = \inf_{\theta \in \Theta}{D(G_{\theta}, \hat{F})},
\label{minDistance}
\end{eqnarray}
where $D$ is one of the introduced distances: $KS$, $CvM$ and $AD$, $\hat{F}$ is empirical CDF introduced in (\ref{empCDF}), $G$ is a rescaled cumulative distribution function defined in (\ref{finalCDF}) and $\Theta$ is the set of parameters of a certain class of distribution functions. As it was mentioned, in this paper we consider three classes of distributions, namely $\alpha-$stable, tempered stable and gamma and for the selected distributions the estimation was done for the following parameter spaces
\begin{itemize}
	\item $\alpha$-stable:
	$\alpha \in (0,1), \sigma \in \mathbb{R}_{+}$
	\item tempered stable:
	$\alpha \in (0,1), \lambda \in \mathbb{R}_{+}, \sigma \in \mathbb{R}_{+}$
	\item gamma: 
	$k \in \mathbb{R}_{+}, \theta \in \mathbb{R}_{+}$.
\end{itemize}

Due to discretization of the input data the uniqueness of the solution cannot always be guaranteed. However, by using standard estimation procedures as an initial guess and by applying reasonable boundary conditions, we have shown in our simulation study that in almost all cases the convergence to the true parameters was fulfilled. 
For finding the minimum distance we have used the Nelder-Mead simplex algorithm described in \cite{bib:Lagaris1998}.

It is worth mentioning that the proposed methodology can be applied to broad set of distributions - in general it may serve as a method of fitting any distribution function if the observations are rounded. However, as it was mentioned, in this paper we have limited the classes of waiting times distributions to $\alpha-$stable, tempered stable and gamma distributions. 

\subsection{Jumps analysis}
\label{sec:jumpsEstimation}

The main focus of this paper is the estimation of waiting times in CTRW scenario, however for the sake of completeness we mention as well estimation procedure of jumps parameters. The problem has been widely discussed in multiple articles. In particular, in \cite{bib:Janczura2011} it was proposed to make the probabilities of jumps dependent on a deterministic periodic function. In case of carbon dioxide concentration (see section \ref{sec:analysis}) the periodicity is not observed.

Due to specific properties of the real data analyzed in the next section we have proposed a simplified estimation of probabilities of jumps which follow the process 

\begin{eqnarray}
X_t = \left\{
                \begin{array}{ll}
                 a	\quad \textrm{with probability $p_t$} \\
								-a \quad \textrm{with probability $1-p_t$.} 
             \end{array}
          \right.
\label{jumpsEstimation}
\end{eqnarray}

The procedure of estimation the $p_t$ function is as follows: owing to the fact that to the analysis we took the data corresponding to one day of the week (Monday) and we consider all taken time series as realization of the same process, then $p_t$ is just a number of upside jumps divided by the number of considered trajectories for given time point $t$. The value of $a$ corresponds to the resolution of the sensor and in our case is equal to $50$. The similar procedure was proposed in \cite{bib:Janczura2011} also for indoor air quality data. 


\section{Simulation Study}
\label{sec:simulation}

In the simulation study we have used 100 samples, each consisting of 3000 observations. Simulation schema of CTRW model (and corresponding subordinated process) can be found for example in \cite{sub3}. The decision regarding length of samples was driven by initial analysis of the data, see section \ref{sec:analysis}. We assume the jumps $X_i$ in the considered model (\ref{CTRW_proc}) are independent identically distributed random variables equal $50$ or $-50$ with probabilities $0.5$. Moreover we consider three waiting times distributions, namely $\alpha-$stable, tempered stable and gamma.

At first we test our algorithm to identify correct distribution class corresponding to waiting times. After extraction of the waiting times from the considered trajectory we ran the algorithm to find the values of the parameters which minimize the distance in each of the distribution classes. The values of the estimators with the smallest distance  indicate the distribution of the best fitting to considered data set. The distribution from which the parametric estimator has been chosen is then marked as the most appropriate for data description.

\begin{table}[]
\centering
\caption{Percentage of correctly identified distribution classes. Average distance is given in the bracket.}
\label{percentageTable}
\begin{tabular}{cc|c|c|c|}
\cline{3-5}
\multicolumn{1}{l}{}                                   &                  & KS          & CvM         & AD          \\ \hline
\multicolumn{1}{|c|}{\multirow{3}{*}{stable}}          & S(0.6, 1)        & 68 (0.1682) & 74 (0.0006) & 72 (0.0331) \\ \cline{2-5} 
\multicolumn{1}{|c|}{}                                 & S(0.75, 1)       & 81 (0.0829) & 91 (0.0004) & 91 (0.0025) \\ \cline{2-5} 
\multicolumn{1}{|c|}{}                                 & S(0.9, 1)        & 93 (0.0526) & 99 (0.0002) & 99 (0.0011) \\ \hline
\multicolumn{1}{|c|}{\multirow{3}{*}{tempered stable}} & TS(0.3, 1, 1)    & 98 (0.0019) & 100 (0.0001) & 100 (0.0001) \\ \cline{2-5} 
\multicolumn{1}{|c|}{}                                 & TS(0.6, 0.5, 1)  & 97 (0.0021) & 100 (0.0001) & 100 (0.0001) \\ \cline{2-5} 
\multicolumn{1}{|c|}{}                                 & TS(0.9, 0.1, 1)  & 99 (0.0014) & 100 (0.0001) & 100 (0.0001) \\ \hline
\multicolumn{1}{|c|}{\multirow{3}{*}{gamma}}           & G(0.5, 10)       & 93 (0.0181) & 100 (0.0001) & 100 (0.0006) \\ \cline{2-5} 
\multicolumn{1}{|c|}{}                                 & G(2, 5)          & 96 (0.0235) & 98 (0.0002) & 98 (0.0011) \\ \cline{2-5} 
\multicolumn{1}{|c|}{}                                 & G(5, 2)          & 97 (0.0263) & 99 (0.0002) & 99 (0.0015) \\ \hline
\end{tabular}
\end{table}

The results presented in Table \ref{percentageTable} indicate that out of the three distances the most accurate in selecting proper distribution were $CvM$ and $AD$. Lower efficiency of the algorithm in the $\alpha$-stable distribution is due to low number of observations.

In the next step we have examined the robustness of our estimators with respect to various combinations of distribution parameters. The dispersion from the real parameter has been measured using mean square error, for a sample consisting of $n$ observations the error is calculated in the following way
\begin{eqnarray}
MSE(\hat{\theta}) = \sum^{n}_{i=1}(\hat{\theta}_i - \theta)^2.
\label{biasMSE}
\end{eqnarray}
In addition, we have compared our estimation procedure with the most commonly used estimators McCulloch in case of $\alpha$-stable distribution and method of moments in case of tempered stable and gamma distributions. It is worth mentioning that in case of $\alpha$-stable distribution we have tested as well the regression method. However, the results were worse than in case of McCulloch method therefore for the final analysis only results of the latter method are shown. 
\begin{table}[hbtp]
\centering
\caption{Estimation of $\alpha$-stable distribution parameters. In each cell two values are given: upper one corresponds to median of estimated parameter, the lower one is MSE. The notation used is S($\alpha, \sigma$). The mean squared error is given in scale $10^{-3}$.}
\label{alphaStableEstim}
\begin{tabular}{c|c|c|c|c|c|c|c|c|}
\cline{2-9}
                                   & \multicolumn{4}{c|}{$\alpha$}                       & \multicolumn{4}{c|}{$\sigma$}                       \\ \cline{2-9} 
                                   & KS              & CvM             & AD              & McC              & KS              & CvM             & AD              & McC              \\ \hline
\multicolumn{1}{|c|}{S(0.6, 0.5)}  & \specialcell{0.759\\26.0} & \specialcell{0.626\\7.0} & \specialcell{0.626\\7.2} & \specialcell{0.650\\80.2} & \specialcell{0.550\\42.5} & \specialcell{0.535\\41.9} & \specialcell{0.539\\24.0} & \specialcell{0.733\\53689249.6} \\ \hline
\multicolumn{1}{|c|}{S(0.6, 1)}    & \specialcell{0.744\\25.1} & \specialcell{0.623\\5.6} & \specialcell{0.620\\6.5} & \specialcell{0.634\\89.3} & \specialcell{0.748\\180.1} & \specialcell{1.008\\71.2} & \specialcell{0.999\\130.2} & \specialcell{1.266\\53791999.7} \\ \hline
\multicolumn{1}{|c|}{S(0.6, 2)}    & \specialcell{0.775\\29.6} & \specialcell{0.632\\7.8} & \specialcell{0.628\\6.7} & \specialcell{0.641\\147.2} & \specialcell{1.083\\970.4} & \specialcell{1.756\\444.6} & \specialcell{1.778\\350.2} & \specialcell{2.309\\100341231.4} \\ \hline
\multicolumn{1}{|c|}{S(0.75, 0.5)} & \specialcell{0.807\\3.9} & \specialcell{0.752\\0.8} & \specialcell{0.753\\0.9} & \specialcell{0.746\\34.6} & \specialcell{0.437\\7.0} & \specialcell{0.498\\14.9} & \specialcell{0.496\\3.6} & \specialcell{0.526\\24657038.0} \\ \hline
\multicolumn{1}{|c|}{S(0.75, 1)}   & \specialcell{0.816\\6.0} & \specialcell{0.752\\1.5} & \specialcell{0.752\\1.5} & \specialcell{0.765\\40.0} & \specialcell{0.777\\77.5} & \specialcell{0.992\\36.9} & \specialcell{0.999\\31.1} & \specialcell{0.989\\25456771.7} \\ \hline
\multicolumn{1}{|c|}{S(0.75, 2)}   & \specialcell{0.845\\10.9} & \specialcell{0.763\\2.8} & \specialcell{0.761\\2.5} & \specialcell{0.762\\95.9} & \specialcell{1.282\\653.0} & \specialcell{1.893\\249.6} & \specialcell{1.909\\171.8} & \specialcell{2.153\\43178539.9} \\ \hline
\multicolumn{1}{|c|}{S(0.9, 0.5)}  & \specialcell{0.918\\1.3} & \specialcell{0.903\\0.2} & \specialcell{0.901\\0.2} & \specialcell{0.959\\37.7} & \specialcell{0.480\\12.4} & \specialcell{0.490\\5.3} & \specialcell{0.493\\2.5} & \specialcell{0.756\\24459941.5} \\ \hline
\multicolumn{1}{|c|}{S(0.9, 1)}    & \specialcell{0.920\\2.3} & \specialcell{0.901\\0.1} & \specialcell{0.902\\0.1} & \specialcell{0.895\\42.6} & \specialcell{0.812\\136.6} & \specialcell{0.975\\23.2} & \specialcell{0.976\\20.5} & \specialcell{0.941\\24477120.4} \\ \hline
\multicolumn{1}{|c|}{S(0.9, 2)}    & \specialcell{0.940\\4.0} & \specialcell{0.902\\0.2} & \specialcell{0.902\\0.5} & \specialcell{0.900\\49.8} & \specialcell{1.281\\1151.4} & \specialcell{1.926\\95.8} & \specialcell{1.940\\81.8} & \specialcell{1.960\\461.5} \\ \hline
\end{tabular}
\end{table}

First, we examine the correctness of proposed estimator for $\alpha$-stable distribution. For various configurations of parameters we estimate $\alpha$ and $\sigma$. It is easy to see that the mean squared error compared to McCulloch method is significantly smaller. The performance of the method increases with higher $\alpha$ and smaller values of $\sigma$. It is obvious and stems from the fact that more observations are used for estimation procedure, e.g. for $\alpha = 0.6, \sigma = 2$ on average for a single trajectory there were only 57 waiting times while respectively for $\alpha=0.9, \sigma=0.5$ there were 407 waiting times. 

Out of the three proposed distances Cramer-von Mises and Anderson-Darling have comparable results and in almost all cases are more efficient than Kolmogorov-Smirnov. 

In case of tempered stable distribution we have examined changes with respect to $\alpha, \lambda$ and $\sigma$. Since the changes with respect to $\sigma$ resemble the ones for $\alpha$-stable distribution we present only the changes in terms of $\alpha$ and $\lambda$. 

\begin{table}[]
\centering
\caption{Tempered stable distribution parameter estimation. The upper parameter corresponds to median, the lower one is MSE. The notation used is TS($\alpha, \lambda, \sigma$). The mean squared error values are given in scale $10^{-3}$.}
\label{temperedStableTable}
\begin{tabular}{c|c|c|c|c|c|c|c|c|}
\cline{2-9}
\multicolumn{1}{l|}{}                 & \multicolumn{4}{c|}{$\alpha$}                       & \multicolumn{4}{c|}{$\lambda$}                      \\ \cline{2-9} 
\multicolumn{1}{l|}{}                 & KS              & CvM             & AD              & MoM              & KS              & CvM             & AD              & MoM              \\ \hline
\multicolumn{1}{|c|}{TS(0.3, 0.1, 1)} & \specialcell{0.294\\0.6} & \specialcell{0.277\\6.3} & \specialcell{0.278\\7.1} & \specialcell{0.471\\39.3} & \specialcell{0.109\\0.7} & \specialcell{0.099\\0.2} & \specialcell{0.101\\0.2} & \specialcell{0.098\\0.7} \\ \hline
\multicolumn{1}{|c|}{TS(0.3, 0.5, 1)} & \specialcell{0.300\\0.2} & \specialcell{0.295\\0.7} & \specialcell{0.278\\4.2} & \specialcell{0.749\\205.1} & \specialcell{0.520\\2.4} & \specialcell{0.513\\3.0} & \specialcell{0.509\\2.9} & \specialcell{0.361\\22.1} \\ \hline
\multicolumn{1}{|c|}{TS(0.3, 1, 1)}   & \specialcell{0.318\\0.7} & \specialcell{0.299\\0.1} & \specialcell{0.342\\3.1} & \specialcell{0.876\\331.4} & \specialcell{1.045\\9.4} & \specialcell{1.057\\10.6} & \specialcell{1.043\\13.0} & \specialcell{0.578\\187.3} \\ \hline
\multicolumn{1}{|c|}{TS(0.6, 0.1, 1)} & \specialcell{0.595\\0.7} & \specialcell{0.692\\4.3} & \specialcell{0.589\\4.1} & \specialcell{0.644\\6.2} & \specialcell{0.117\\2.0} & \specialcell{0.102\\0.7} & \specialcell{0.106\\0.6} & \specialcell{0.112\\1.1} \\ \hline
\multicolumn{1}{|c|}{TS(0.6, 0.5, 1)} & \specialcell{0.598\\0.3} & \specialcell{0.594\\1.3} & \specialcell{0.530\\6.4} & \specialcell{0.803\\42.4} & \specialcell{0.513\\2.3} & \specialcell{0.500\\2.9} & \specialcell{0.531\\5.8} & \specialcell{0.405\\14.8} \\ \hline
\multicolumn{1}{|c|}{TS(0.6, 1, 1)}   & \specialcell{0.612\\0.7} & \specialcell{0.604\\0.5} & \specialcell{0.585\\3.2} & \specialcell{0.885\\82.7} & \specialcell{1.033\\9.0} & \specialcell{1.025\\10.4} & \specialcell{1.052\\15.0} & \specialcell{0.641\\131.6} \\ \hline
\multicolumn{1}{|c|}{TS(0.9, 0.1, 1)} & \specialcell{0.882\\0.7} & \specialcell{0.885\\0.8} & \specialcell{0.893\\1.0} & \specialcell{0.888\\0.6} & \specialcell{0.138\\5.5} & \specialcell{0.140\\3.3} & \specialcell{0.108\\3.2} & \specialcell{0.121\\3.1} \\ \hline
\multicolumn{1}{|c|}{TS(0.9, 0.5, 1)} & \specialcell{0.892\\0.2} & \specialcell{0.889\\0.2} & \specialcell{0.880\\2.6} & \specialcell{0.909\\0.2} & \specialcell{0.532\\4.8} & \specialcell{0.553\\3.7} & \specialcell{0.573\\22.4} & \specialcell{0.499\\9.3} \\ \hline
\multicolumn{1}{|c|}{TS(0.9, 1, 1)}   & \specialcell{0.903\\0.4} & \specialcell{0.897\\0.3} & \specialcell{0.897\\0.9} & \specialcell{0.930\\1.0} & \specialcell{1.029\\6.4} & \specialcell{1.014\\4.7} & \specialcell{1.120\\18.8} & \specialcell{0.808\\54.2} \\ \hline
\end{tabular}
\end{table}

At the end we examine changes with respect to shape and scale parameters for gamma distribution. In Table \ref{gammaTable} we have compared estimation of shape and scale parameters for proposed estimators and the method of moments.
In most of the cases minimum distance estimators outperform standard method of methods. Out of the three proposed estimation methods the one based on Kolmogorov-Smirnov distance gives the best results, i.e. the lowest mean squared error and relatively low bias for most of the simulated waiting times.
It is obvious that the estimation worsens when the number of observations decreases. For example, in case of gamma distribution with $k=5, \theta=10$ in a simulated trajectory of length 3000, on average only ca. 60 observations are used for estimation.

\begin{table}[]
\centering
\caption{Gamma distribution parameter estimation. The upper parameter corresponds to median, the lower one is MSE. The notation used is G($k, \theta$). The mean squared error values are given in scale $10^{-3}$.}
\label{gammaTable}
\begin{tabular}{c|c|c|c|c|c|c|c|c|}
\cline{2-9}
\multicolumn{1}{l|}{}                 & \multicolumn{4}{c|}{$k$}                       & \multicolumn{4}{c|}{$\theta$}                      \\ \cline{2-9} 
\multicolumn{1}{l|}{}                 & KS              & CvM             & AD              & MoM              & KS              & CvM             & AD              & MoM              \\ \hline
\multicolumn{1}{|c|}{G(0.5, 2)}    & \specialcell{0.523\\6.7} & \specialcell{0.503\\9.5} & \specialcell{0.510\\7.2} & \specialcell{1.614\\1266.1} & \specialcell{1.962\\21.9} & \specialcell{1.962\\42.1} & \specialcell{1.981\\33.2} & \specialcell{1.205\\642.6} \\ \hline
\multicolumn{1}{|c|}{G(0.5, 5)}    & \specialcell{0.517\\4.9} & \specialcell{0.497\\6.0} & \specialcell{0.505\\5.0} & \specialcell{0.974\\230.6} & \specialcell{4.974\\190.3} & \specialcell{5.049\\279.3} & \specialcell{5.011\\211.9} & \specialcell{3.817\\1470.8} \\ \hline
\multicolumn{1}{|c|}{G(0.5, 10)}   & \specialcell{0.516\\8.4} & \specialcell{0.503\\8.5} & \specialcell{0.502\\7.4} & \specialcell{0.776\\85.1} & \specialcell{9.626\\1378.7} & \specialcell{10.035\\2346.4} & \specialcell{9.939\\1796.5} & \specialcell{8.320\\3478.3} \\ \hline
\multicolumn{1}{|c|}{G(2, 2)}      & \specialcell{1.993\\2.7} & \specialcell{2.016\\18.9} & \specialcell{2.011\\17.1} & \specialcell{2.180\\56.3} & \specialcell{2.017\\5.1} & \specialcell{1.970\\17.9} & \specialcell{1.990\\15.6} & \specialcell{1.886\\26.2} \\ \hline
\multicolumn{1}{|c|}{G(2, 5)}      & \specialcell{2.061\\15.2} & \specialcell{2.032\\19.3} & \specialcell{2.038\\18.9} & \specialcell{2.036\\40.9} & \specialcell{4.843\\139.9} & \specialcell{4.979\\164.7} & \specialcell{4.921\\153.7} & \specialcell{4.941\\222.8} \\ \hline
\multicolumn{1}{|c|}{G(2, 10)}     & \specialcell{2.156\\22.1} & \specialcell{2.101\\26.2} & \specialcell{2.099\\26.2} & \specialcell{2.001\\75.3} & \specialcell{8.455\\2801.7} & \specialcell{9.257\\1025.0} & \specialcell{9.233\\1113.8} & \specialcell{10.027\\2094.1} \\ \hline
\multicolumn{1}{|c|}{G(5, 2)}      & \specialcell{4.990\\1.7} & \specialcell{5.005\\31.1} & \specialcell{4.950\\31.6} & \specialcell{4.874\\235.2} & \specialcell{1.999\\3.1} & \specialcell{1.997\\7.5} & \specialcell{1.991\\6.5} & \specialcell{2.042\\41.2} \\ \hline
\multicolumn{1}{|c|}{G(5, 5)}      & \specialcell{5.135\\18.0} & \specialcell{5.144\\35.3} & \specialcell{4.989\\36.8} & \specialcell{4.766\\593.0} & \specialcell{4.525\\315.2} & \specialcell{4.766\\123.9} & \specialcell{4.812\\122.3} & \specialcell{5.278\\810.7} \\ \hline
\multicolumn{1}{|c|}{G(5, 10)}     & \specialcell{5.200\\40.0} & \specialcell{5.199\\39.0} & \specialcell{5.199\\38.6} & \specialcell{4.522\\1351.6} & \specialcell{7.701\\5944.7} & \specialcell{6.107\\8004.7} & \specialcell{6.636\\7172.4} & \specialcell{11.160\\8192.4} \\ \hline
\end{tabular}
\end{table}


\section{Carbon dioxide data analysis}
\label{sec:analysis}

In the following section we will perform analysis of carbon dioxide data. In subsection \ref{sec:dataDescription} we describe the data and introduce the data cleansing procedure.  In subsection \ref{sec:dataAnalysis} we describe the data using CTRW model with the main emphasis on waiting times analysis.

\subsection{Data Description}
\label{sec:dataDescription}

Measurements of carbon dioxide concentration were performed in a lecture room with an amphitheatric layout. Room dimensions are 19 x 8 m x (4 -2.9 m). It has only one external wall, which is fitted with huge, openable 6 windows. Despite availability of mechanical ventilation, air exchange is realized predominantly via the natural ventilation. Teaching hours extend from 9:00 to 21:00. Classes are held during all working days and on majority of weekends (part time studies). Teaching blocks are typically 1.5 h long with the brakes of 15 min in-between. Although designed for 90 students, the lecture room is hardly ever occupied to that extent. In the examined period, the number of listeners changed considerably within a single day as well as from day to day.

$CO_2$ measurements were performed on the central part of the room at the height of about 1 m. The measuring device was separated from the direct influence of the emission sources (students and the teacher). The monitoring was realized with the instrument dedicated to continuous measurements and data logging. It is based on the NDIR sensor and it offers the measuring range 0-5000 ppm; accuracy 50 ppm +3 \% of measured value and the measurement data resolution 1 ppm. This level of performance may be currently considered as a standard in indoor air quality studies.

The measurement results were recorded with time resolution of 15 s. The data was collected during 9 consecutive weeks, between 20th February and 22nd April 2013. Since the lecture hours are different for each day, in this paper we focus only on Mondays - the analysis can be easily replicated for other weekdays. The trajectories of eight consecutive Mondays is displayed in Fig. \ref{fig:consecMon}.

\begin{figure}
\centering
  \includegraphics[width=1\textwidth]{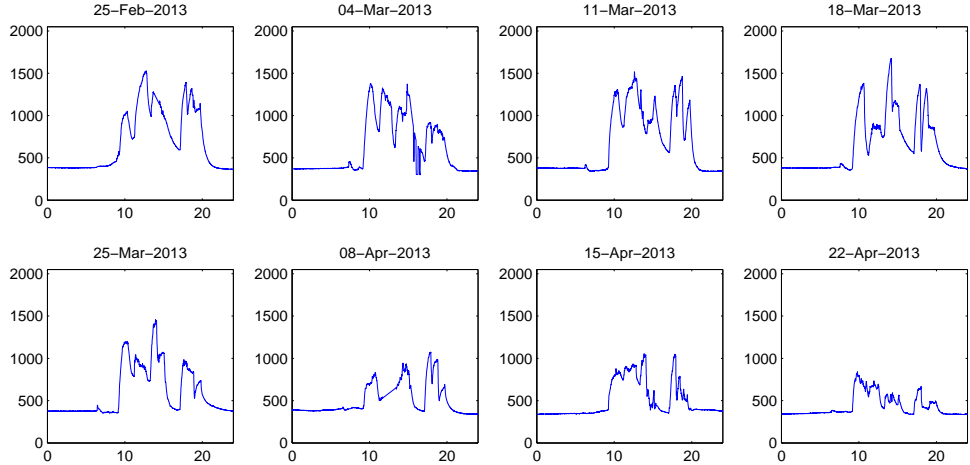}
\caption{Carbon dioxide concentration in eight consecutive Mondays from 25th February till 22nd April. Due to holidays, no lectures were held on 1st April.}
\label{fig:consecMon}
\end{figure}

By looking at the chart it is easy to notice that during lecture hours the carbon dioxide concentration rises significantly and decreases during the breaks. However, in contrast to the analysis of temperature data presented in \cite{bib:Janczura2013}, no particular trend can be observed in the data. Additionally, due to to irregular student attendance at lectures, estimation and prediction of $CO_2$ concentration based on the number of students subscribed to the lectures might be inaccurate.

Taking into account teaching hours and the measurement device accuracy we have performed 'cleansing' of the data in order to reveal only significant changes in the data:
\noindent
\begin{itemize}
\item The lectures are held only till 20:00. Around 21:00 the process stabilized around certain level and only small fluctuation can be observed till the next day lectures. We have called the time from 21:00 till 09:00 stale period and removed it from core analysis. As estimation of stale period is out of interest and the average value fluctuates around 400 ppm, we assume that during stale period the carbon dioxide concentration is equal to 400 ppm. 
\item The device accuracy is 50 ppm  $\pm$  3 $\%$ of the measured value. Such a difference between measured values indicates the change in the state of indoor air, with respect to $CO_2$ concentration, which is greater than the measurement error. Hence, it should be respected. For this reason we have rounded the observed values to the closest multiple of 50.
\item Furthermore, we have removed periods where the value has changed for short period - less than 30 seconds - and then came back to previous level.
\end{itemize}

\subsection{Data Analysis}
\label{sec:dataAnalysis}
\begin{figure}
\centering
  \includegraphics[width=1\textwidth]{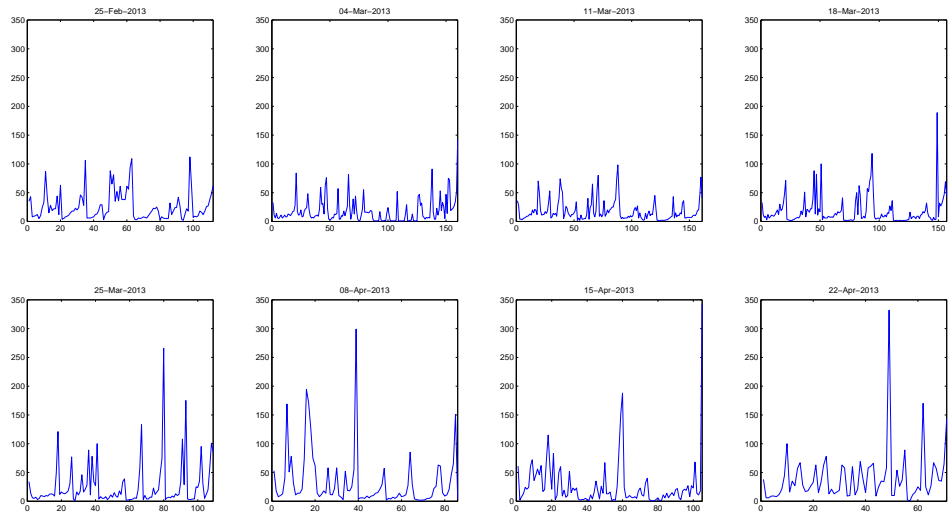}
\caption{The vectors of waiting times in eight consecutive Mondays from 25th February till 22nd April.}
\label{wait_aga}
\end{figure}
\begin{figure}
\centering
\includegraphics[width=1\textwidth]{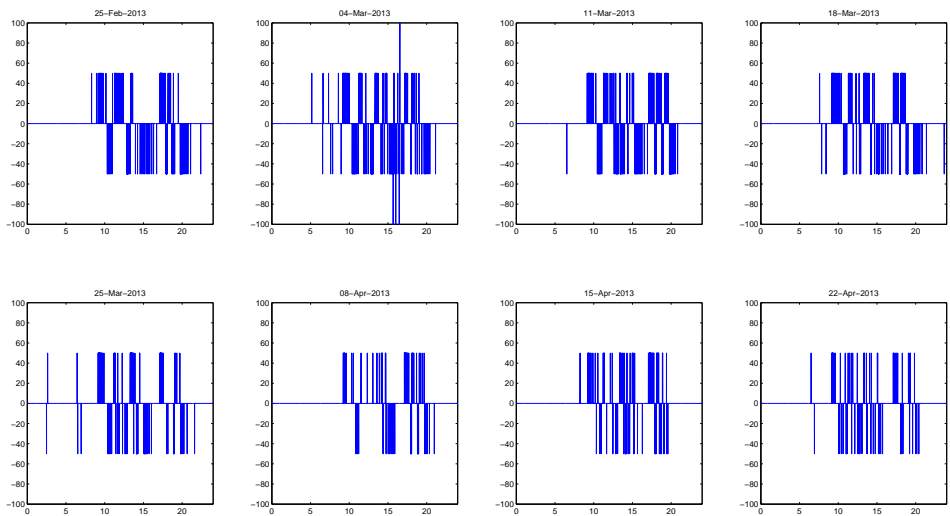}
\caption{The vectors of jumps in eight consecutive Mondays from 25th February till 22nd April. }
\label{jumps_aga}
\end{figure}
\begin{figure}
\centering
  \includegraphics[width=1\textwidth]{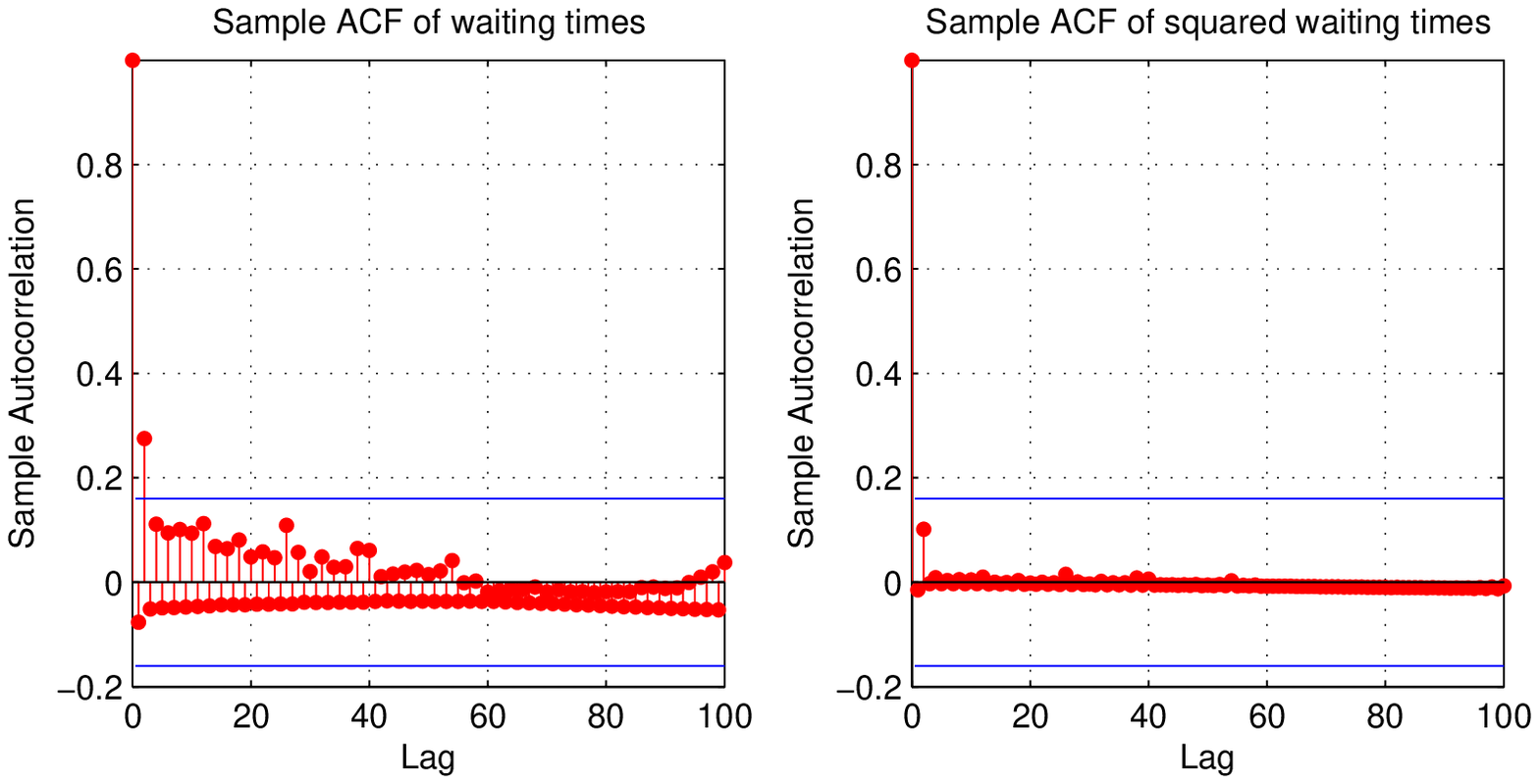}
\caption{The sample autocorrelation function of constant time periods series (left panel) and corresponding squared series (right panel) for 25th Feb.}
\label{fig:stationarityCheck}
\end{figure}

After applying the cleansing procedure described in subsection \ref{sec:dataDescription}, we considered CTRW as a stochastic system that allows modelling this type of time series. In general the cleansing procedure is not obligatory. In addition, the behaviour of raw data suggests that it might be modelled using CTRW as well.

At the first stage we have divided the analysed data into two vectors: first  corresponding to waiting times and second consisting of jump sizes. In Fig. \ref{wait_aga} we present the vectors representing the waiting times while in Fig. \ref{jumps_aga} - the corresponding jumps. 
We have assumed that the data constitutes of 8 realizations of the same stochastic process. Each sample has been analysed independently in the context of waiting times and probability of upward-downward movement.

In order to make use of CTRW to model the data, it needs to be validated if the vector $\{T_i\}$ of waiting times forms an independent, identically distributed sample. For each of 8 samples we test the independence with a visual check of the autocorrelation function of the series and squared series as described in \cite{bib:Wylomanska2012}. In Fig. \ref{fig:stationarityCheck} we present the obtained autocorrelation functions for the first sample (25th Feb) together with the confidence intervals for a white noise. As can be observed, the calculated values are close to 0 and most of them lie within the white noise confidence intervals. The procedure has been applied to all 8 samples and the results resemble the one in Fig. \ref{fig:stationarityCheck}. Hence, we may conclude that the waiting times are independent. The identity of distributions can be checked for example by testing the behaviour of the empirical second moment of the time series as described in \cite{bib:Gajda2012}.

The estimation of parameters has been performed using the method described in section \ref{sec:estimation}. As shown in our simulation study in section \ref{sec:simulation} the best fit was obtained using Anderson-Darling distance, hence this criteria was used for parameter estimation. In Fig. \ref{fig:cdfFittedToData}
we present fit of rescaled modified cumulative distribution function for all 3 distributions: $\alpha$-stable, tempered stable and gamma to empirical distribution function on 25th Feb. As shown in the plot, gamma distribution provides relatively good fit to the data and outperforms the other distributions.

Out of 8 samples in 12.5\% cases the best fit was obtained using $\alpha$-stable distribution, in 25\% cases tempered stable was chosen and in 62.5\% cases gamma. Furthermore, the differences between gamma distribution and the other ones in the remaining 37.5\% of cases was negligible. This indicated that gamma distribution is the most adequate for waiting times estimation. The results of fitting of gamma distribution are shown in Table \ref{MondaysResults}: the estimated values of shape parameter are similar for each sample, there is more variability in the estimated values of scale parameter. For further processing and simulation we have used the median of both parameters. Potentially a more robust estimation procedure would involve minimizing the distances of a particular distribution function with respect to all samples. On the other hand, such a method might be biased for example if one of the samples is not consistent with the others.

\begin{table}[]
\centering
\caption{Gamma distribution parameters estimated for 8 consecutive Mondays}
\label{MondaysResults}
\begin{tabular}{c|c|c|c|}
\cline{2-4}
\multicolumn{1}{c|}{}            & $k$    & $\theta$            &theoretical mean (in min.)         \\ \hline
\multicolumn{1}{|c|}{25.02.2013} & 0.3208 & 30.2587& 2.43 \\ \hline
\multicolumn{1}{|c|}{04.03.2013} & 0.4401 & 21.1849&2.33 \\ \hline
\multicolumn{1}{|c|}{11.03.2013} & 0.6294 & 11.0901 &1.75\\ \hline
\multicolumn{1}{|c|}{18.03.2013} & 0.4646 & 17.7034 &2.06\\ \hline
\multicolumn{1}{|c|}{25.03.2013} & 0.2326 & 32.0382 &1.86\\ \hline
\multicolumn{1}{|c|}{08.04.2013} & 0.4071 & 21.8992 &2.23\\ \hline
\multicolumn{1}{|c|}{22.04.2013} & 0.3578 & 28.6239 &2.56\\ \hline
\multicolumn{1}{|c|}{29.04.2013} & 0.4305 & 19.1246&2.06 \\ \hline
\end{tabular}
\end{table}

\begin{figure}
\centering
  \includegraphics[width=0.5\textwidth]{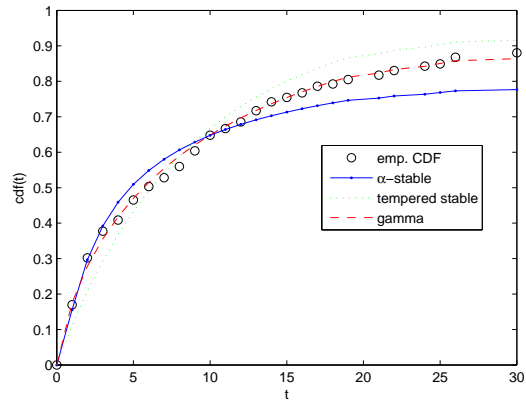}
\caption{Empirical CDF on 25th 
Feb and rescaled modified CDF of fitted distributions: $\alpha$-stable, tempered stable and gamma.}
\label{fig:cdfFittedToData}
\end{figure}

The probability of upward and downward movements is highly dependent on several factors, in particular whether lectures are held at that time and how many students have attended the lecture. Due to variability of these factors, fitting of any trend function is problematic and the results might be unreliable. Taking it into account, we have decided to model the probability based on the hour of the day, the procedure has been described in subsection \ref{sec:jumpsEstimation}. The result of the fit is presented in Fig. \ref{fig:probOfUpwardMove}.

\begin{figure}
\centering
  \includegraphics[width=0.5\textwidth]{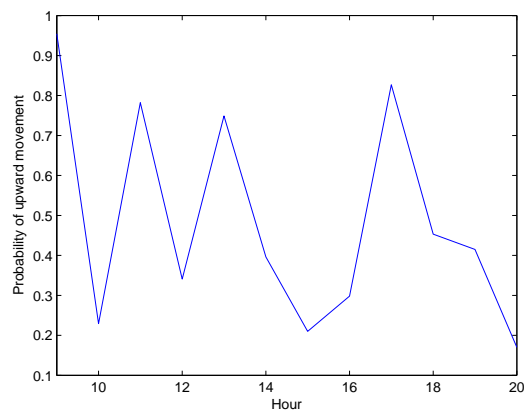}
\caption{Probability of upward movement during lecture hours.}
\label{fig:probOfUpwardMove}
\end{figure}

Taking into account the above mentioned facts we have used the estimated waiting times and probability of upward and downward movement to simulate the process. In the period marked as stale - from 09:00 till 21:00 - the process stays at the same level 400 ppm. During the lecture hours the process follows CTRW with waiting times from gamma distribution with shape equal to 0.4188 and scale equal to 21.5421. The probability of the jumps depends on the hour and has been shown in Fig. \ref{fig:probOfUpwardMove}. In addition we have applied constraints such that after 21:00 the process tends to the stale value - once it is reached it stays on this level. We have simulated 10000 trajectories and matched it with our data samples: exemplary comparison of sample from 25th March with quantile lines on levels 0.1, 0.5 and 0.9 is shown in Fig. \ref{fig:singleTraj2}. Overall we conclude that the simulated process is fitted relatively good to the data and could be used as a predictor of future values.

\begin{figure}
\centering
  \includegraphics[width=0.5\textwidth]{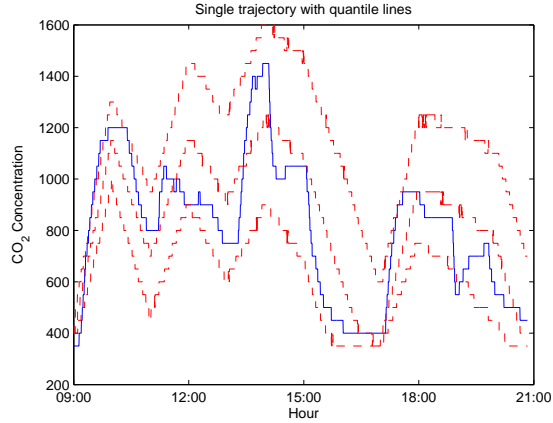}
\caption{$CO_2$ concentration on 25th March (Monday) during lecture hours together with simulated quantile lines on levels 0.1, 0.5 and 0.9.}
\label{fig:singleTraj2}
\end{figure}

{However, major practical implications of the proposed approach may be demonstrated in the domain of continuous measurements. It consists in  the precise estimation of waiting times. Such estimation allows to determine the distribution of time until the process changes significantly. From the measurement point of view it is the guideline how frequently the state of the measured parameter should be checked without losing relevant information about its temporal variation. More specific, the distribution mean may be considered as the indication of optimum sampling frequency.}
 
{In Table \ref{MondaysResults} we present the parameters of best waiting times distributions fitted to the CO$_2$ concentration data collected during 8 consecutive Mondays in lecture room. As shown, the mean values of the distribution were from 1 min 45 s to 2 min 30 s. Taking example of the day with a  minimum mean, we see that most frequently, meaningful changes of $CO_2$ concentration occurred every  1 min 45 s. Consequently, if the measurement of the parameter is performed less frequently, significant information may be lost. The inverse of the quoted time interval could be considered as the suggested sampling frequency. Smaller frequency would not be recommended. Considering that the data analyzed in this work was collected every 15 s, the obtained result of 1 min 45 s indicates the possibility of saving a considerable data storage space.  By adjusting sampling frequency, the amount of data collected is 7 times smaller. Hence, the duration of continuous measurements between consequent data download may be extended 7 times. This result is of high practical value. However, we must emphasize that it does not have a status of general recommendation. It is valid for the CO$_2$ concentration measurements performed with the defined accuracy of 50 ppm in the particular indoor environment. More studies would have to be performed in order to  obtain more generic results applicable in measurement practice in different circumstances. Because the methodology can be applied to other parameters e.g. temperature or relative humidity the approach presented in this paper bares considerable consequences for planning indoor air quality monitoring networks, in particular using of portable devices.}


\section{Conclusions}
\label{sec:conclusions}
In this paper we have considered a stochastic system that allows for modeling the carbon dioxide concentration, one of the main parameters of indoor air quality. The applied model is based on the continuous time random walk, the process which exhibits anomalous diffusion behavior. In the CTRW model here we have mainly concentrated on the waiting times  and proposed a new techniques for estimation and statistical investigation of their distribution. The methods are based on the modified cumulative distribution function,  
an extended version of classical cumulative distribution function, which is more appropriate to distribution description in case the real data are rounded. This appears also in case of waiting times visible in CTRW trajectories. By using simulated data we have proved the efficiency of proposed techniques. As examples we have considered three waiting times distributions, commonly used in practice, namely $\alpha-$stable, tempered stable and gamma. Moreover we have supported theoretical and simulated results with the real data analysis.\\
The approach was applied to analyze time series of $CO_2$ concentration monitoring data recorded in indoor environment.  It was demonstrated that the method leads to valuable conclusions concerning sampling frequency.  This is one of crucial factors which has to be chosen while planning continuous measurements although most frequently, its best value is not known.  As the presented approach is general enough to be applied to virtually any measured parameter, the method may be useful in determining sampling frequency in continuous measurements in general.

\section*{Acknowledgements}
This  paper  is  a  result  of  the  project:  "The  variability  of  physical  and  chemical  parameters  in  time  as  the  source  of  comprehensive  information  about  indoor  air  quality".  The  research of M.M, A.S. and A.W. is co-financed by the National Science Center, Poland, under the contract
No. UMO-2012/07/B/ST8/03031. \\
The research of J.G. was partially supported by NCN Maestro Grant No. 2012/06/A/ST1/00258.\\




\end{document}